\documentclass[superscriptaddress, reprint, amsmath, amssymb, aps, pra, floatfix]{revtex4-2}

\usepackage[utf8]{inputenc}
\usepackage{graphicx}
\usepackage{dcolumn}
\usepackage{bm}
\usepackage[normalem]{ulem} 
\usepackage{mathrsfs}
\usepackage{physics}
\usepackage{xcolor}
\usepackage{amsmath, amssymb, mathtools, amsthm}
\usepackage[colorlinks=true]{hyperref}
\usepackage{comment} 
\usepackage{orcidlink}

\newtheorem{theorem}{Theorem}
\newtheorem{lemma}{Lemma}
\theoremstyle{remark}

\newcommand{\RR}{\mathcal{R}}
\newcommand{\TT}{\mathcal{T}}
\newcommand{\id}{\mathbb{I}}
\newcommand{\e}{\mathrm{e}}
\newcommand{\asympL}{x\to-\infty}
\newcommand{\asympR}{x\to+\infty}

\begin{document}

\title{Klein tunneling through an asymmetric barrier: Symmetric transmission and directional pair creation}

\author{Andre G. Campos\orcidlink{0000-0003-2923-4647}}
\affiliation{Max Planck Institute for Nuclear Physics, Heidelberg 69117, Germany}
\email{agontijo@mpi-hd.mpg.de}

\author{Denys I. Bondar~\orcidlink{0000-0002-3626-4804}}
\email{dbondar@tulane.edu}
\affiliation{Department of Physics and Engineering Physics, Tulane University, New Orleans, Louisiana 70118, United States}

\begin{abstract}
We prove that the transmission probability for the Klein tunneling through a spatially asymmetric barrier is the same for left and right incidence whenever each asymptotic lead carries a single propagating channel per direction. Time-dependent Wigner-function simulations confirm this and locate the missing directionality in the barrier's interior, where a sharp edge generates several times more under-barrier negative-energy population than a smooth one. Directional control in the Klein regime therefore resides in pair production rather than in the transmitted current.
\end{abstract}

\date{\today}

\maketitle

\section{Introduction}

A quantum particle that strikes a one-dimensional potential from the left and one that strikes it from the right produce scattering states that look nothing alike, yet the two share a striking property: the reflection and transmission probabilities are identical, even when the potential possesses no spatial symmetry whatsoever. This was noted by Landau and Lifshitz~\cite{LandauLifshitz}, and later in Ref.~\cite{CohenTannoudji}. Shegelski and Sample gave a general and remarkably elegant proof, valid for any finite potential tending to constant values as $x\to\pm\infty$~\cite{ShegelskiSample2020}. This symmetry is not, however, a universal property of a single particle in one dimension: its violation for a potential with an asymptotically linear ramp, $V(x)\propto x$ as $x\to\pm\infty$, was shown in Sec.~V of Ref.~\cite{schach_tunneling_2022}.

Shortly after the Landau-Lifshitz observation, Amirkhanov and Zakhariev~\cite{Amirkhanov1966} showed that the tunneling symmetry can be broken for a  composite particle. They studied a two-particle system resolved into a center-of-mass (C.M.) coordinate and an inter-particle degree of freedom (for illustation see Fig.~2 of Ref.~\cite{Lindberg2023}). Consider a triangular barrier and prepare the inter-particle degree of freedom in its ground state. If the packet meets the gently sloping side, the barrier builds up adiabatically, the internal state is not excited, and the dynamics reduces to the ordinary one-dimensional problem. If instead the packet meets the vertical face, the sudden shakeup drives a transition, and since the internal state was already the lowest one, the only available transition is upward. The excitation energy must be taken out of the C.M. kinetic energy; the C.M. therefore plunges deeper beneath the barrier and transmission is suppressed relative to the opposite orientation. This simple argument, long overlooked, has proved unusually productive: it underlies the demonstration that barrier symmetry can be engineered to enhance or suppress tunneling, to the point that tunneling through a barrier becomes more probable than passing over it~\cite{BondarLiuIvanov2010}. It also has inspired the proposal of asymmetric tunneling of Bose--Einstein condensate~\cite{Lindberg2023}, unique many-body resonant transport phenomena for fermions~\cite{Bilokon2025, SpinRing}, and a black-hole analogue for bosons~\cite{BlackHole}.

Both threads above are nonrelativistic. A distinctive feature of the relativistic Dirac equation, entirely absent from the Schr\"odinger case, is the presence of negative-energy solutions, interpreted as antiparticles. An external field couples the two branches of the spectrum. For a  barrier of height below $mc^2$, this population this coupling is negligible and the dynamics qualitatively reduces to the Schr\"odinger problem, for which left--right symmetry of tunneling is well established. This changes if we work with a high barrier. Klein~\cite{Klein1929} found that a Dirac electron striking a step of height $V_0 > E + mc^2$ is not exponentially attenuated but crosses with unit probability. The barrier pushes the positive-energy continuum into the negative-energy continuum, so the particle traverses the barrier as a negative-energy state and re-emerges as a particle~\cite{Cabrera2016}. The Klein tunneling is an interband transition mediated by the spinor degree of freedom, and has been observed in graphene~\cite{KatsnelsonGeim2006,CastroNeto2009,Allain2011}. Since an internal degree of freedom is coupled to the translational motion,  it is natural to investigate whether the Klein tunneling remains left--right symmetric for an asymmetric barrier.

This question sits at an apparent impasse. On one side stands a scattering-matrix argument, standard in mesoscopic transport~\cite{Buttiker1986,Datta1995,Beenakker1997}. The Dirac Hamiltonian is Hermitian and the Dirac current is conserved, so if each asymptotic lead supports exactly one propagating channel in each direction, the scattering data assemble into a $2\times2$ matrix $S$ that is unitary in the flux-normalized basis. Normalizing the columns of $S$ gives $|r_L|^2+|t_L|^2=1$ and $|r_R|^2+|t_R|^2=1$, while normalizing the rows gives $|r_L|^2+|t_R|^2=1$; subtracting these yields $|t_L|=|t_R|$ and $|r_L|=|r_R|$.  On this reading, Klein tunneling ought to be strictly symmetric, however asymmetric the barrier. On the other side stands the argument of Amirkhanov and Zakhariev~\cite{Amirkhanov1966}. A Dirac particle is not a structureless point in one dimension: alongside the coordinate $x$ it carries a spinor degree of freedom, and a spatially varying field couples that internal space to the translational motion in the same structural way that the barrier couples the inter-particle coordinate of a composite system. The analogy is not exact, but the ingredient that causes the asymmetry, namely a sudden gradient acting on an internal degree of freedom, is present. The two arguments cannot both be right as stated.

This paper resolves the dichotomy, and the resolution turns on \emph{where} the internal degree of freedom is dynamically available. In Sec.~\ref{SecProof} we adapt the Shegelski--Sample proof~\cite{ShegelskiSample2020} to the Dirac equation and confirm the scattering-matrix conclusion: the transmission probability is the same from either side. The Amirkhanov--Zakhariev mechanism cannot overturn this because, asymptotically, there is no second state for it to act on. Where the potential has settled to $V_a$, the dispersion $(E-V_a)^2=m^2c^4+\hbar^2c^2k^2$ admits only two propagating modes at a given energy, one of each current sign, with spinors fixed by $(E,k)$. The sign of $E-V_a$ decides which continuum they belong to, and the two continua are never open at once in the same lead. There is thus no asymptotic level to be promoted into. What the Klein regime does provide is an \emph{interior} window: under the barrier region supports  negative-energy modes, with no asymptotic counterpart, that can be excited only locally. The asymmetry that Klein tunneling cannot express in its transmission probability is expelled into this interior sector, where it survives as a directionally dependent population of negative-energy states. Our time-dependent Wigner-function simulations in Sec.~\ref{SecWignerSimulations} confirm this: Transmission is insensitive to the side of incidence, while a steep barrier edge generates several times more under-barrier negative-energy population than a smooth one.

\section{Time-independent analysis}\label{SecProof}

Consider the stationary Dirac equation
\begin{equation}
    H_D\Psi(x)=E\Psi(x),
    \qquad
    H_D=-i\hbar c\,\alpha\frac{\dd}{\dd x}+M(x),
    \label{eq:dirac-general}
\end{equation}
where
\begin{equation}
    \alpha^\dagger=\alpha,
    \qquad
    \alpha^2=\id,
    \qquad
    M(x)^\dagger=M(x).
\end{equation}
Assume that
\begin{align}
    M(x)\longrightarrow M_L \quad \text{as} \quad x\to-\infty, 
    \notag\\
    M(x)\longrightarrow M_R \quad \text{as} \quad x\to+\infty,
    \label{eq:asymptotic-M}
\end{align}
with constant Hermitian matrices $M_L$ and $M_R$.

The usual two-component Dirac equation with a real electrostatic potential is
included as the special case
\begin{equation}
    H_D=-i\hbar c\,\sigma_x\frac{\dd}{\dd x}
        +mc^2\sigma_z+V(x).
    \label{eq:electrostatic-dirac}
\end{equation}

At the chosen energy $E$, suppose that each asymptotic region supports
exactly one propagating mode carrying current toward $+x$ and exactly one
propagating mode carrying current toward $-x$.  Threshold energies, at which a
propagating mode has zero current, are excluded.  Closed-channel terms that
decay at spatial infinity may be present, but they do not affect the
asymptotic fluxes below.

For two stationary solutions $\Phi$ and $\Psi$ at the same energy $E$, define
the mixed current
\begin{equation}
    J(\Phi,\Psi)=c\,\Phi^\dagger\alpha\Psi.
    \label{eq:mixed-current}
\end{equation}
The ordinary probability current is the quadratic form
\begin{equation}
    j[\Psi]=J(\Psi,\Psi)=c\,\Psi^\dagger\alpha\Psi.
    \label{eq:dirac-current}
\end{equation}

\begin{lemma}\label{lemma1}
For any two solutions $\Phi$ and $\Psi$ of Eq.~\eqref{eq:dirac-general} at the
same real energy, $J(\Phi,\Psi)$ is independent of $x$.
\end{lemma}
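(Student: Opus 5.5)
The plan is to differentiate $J(\Phi,\Psi)$ with respect to $x$ and show that the derivative vanishes identically. This is the Dirac analogue of the constancy of the Wronskian used in the Shegelski--Sample argument. Because Eq.~\eqref{eq:dirac-general} is first order, no integration by parts is needed, only the algebraic properties $\alpha^\dagger=\alpha$, $\alpha^2=\id$ and $M^\dagger=M$.

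First, solve Eq.~\eqref{eq:dirac-general} for the derivative. Multiplying by $\alpha$ and using $\alpha^2=\id$ gives
\begin{equation*}
\hbar c\,\Psi'(x)=i\,\alpha\bigl(E-M(x)\bigr)\Psi(x),
\end{equation*}
and the same equation holds for $\Phi$. Taking the Hermitian conjugate of the $\Phi$ equation, with $E$ real and $\alpha$, $M$ Hermitian, gives
\begin{equation*}
\hbar c\,\Phi'^\dagger=-i\,\Phi^\dagger\bigl(E-M\bigr)\alpha .
\end{equation*}

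Second, apply the product rule:
\begin{equation*}
\frac{\hbar}{1}\,\frac{\dd}{\dd x}J(\Phi,\Psi)=c\hbar\left(\Phi'^\dagger\alpha\Psi+\Phi^\dagger\alpha\Psi'\right).
\end{equation*}
Substituting the two expressions above, the first term becomes $-i\,\Phi^\dagger(E-M)\alpha^2\Psi=-i\,\Phi^\dagger(E-M)\Psi$. The second term becomes $+i\,\Phi^\dagger\alpha^2(E-M)\Psi=+i\,\Phi^\dagger(E-M)\Psi$. The two cancel, so $\dd J/\dd x=0$ and $J$ is constant in $x$.

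There is no real obstacle. The only points needing care are the following:
\begin{itemize}
\item The reality of $E$ and the Hermiticity of $M(x)$ are used at every $x$, not only in the asymptotic limits.
\item Solutions are assumed differentiable, or absolutely continuous if $M$ is merely locally integrable, in which case constancy follows almost everywhere and then everywhere by continuity.
\end{itemize}
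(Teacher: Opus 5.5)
Your proposal is correct and follows the paper's proof essentially line for line: you solve the Dirac equation for $\Psi'$ and $\Phi'^\dagger$ using $\alpha^2=\id$, the reality of $E$ and the Hermiticity of $\alpha$ and $M$, then substitute into the product rule so that the two terms $\mp\frac{i}{\hbar}\Phi^\dagger(E-M)\Psi$ cancel. Your remarks that Hermiticity must hold at every $x$ and about the regularity of solutions are sound additions that the paper leaves implicit.
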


\begin{proof}
Equation~\eqref{eq:dirac-general} gives
\begin{equation}
    \frac{\dd \Psi}{\dd x}
    =\frac{i}{\hbar c}\,\alpha(E-M)\Psi,
    \quad
    \frac{\dd \Phi'^\dagger}{\dd x} 
    =-\frac{i}{\hbar c}\,\Phi^\dagger(E-M)\alpha.
\end{equation}
Therefore,
\begin{align}
    \frac{\dd}{\dd x}J(\Phi,\Psi)
    &=c\,\Phi'^\dagger\alpha\Psi
      +c\,\Phi^\dagger\alpha\Psi' \notag\\
    &=-\frac{i}{\hbar}\Phi^\dagger(E-M)\Psi
      +\frac{i}{\hbar}\Phi^\dagger(E-M)\Psi = 0. \notag
\end{align}
In particular, the current $j[\Psi]$ is constant.
\end{proof}

Let $\phi_{a,+}$ and $\phi_{a,-}$ denote the right-current and left-current
propagating modes in lead $a\in\{L,R\}$.  The labels $+$ and $-$ refer to the direction of current, not necessarily to
the sign of momentum or to the sign of the energy relative to an asymptotic
potential.
Normalize them to unit flux:
\begin{equation}
    j[\phi_{a,+}]=+1,
    \qquad
    j[\phi_{a,-}]=-1.
    \label{eq:unit-flux}
\end{equation}
Away from a threshold, the two modes can also be chosen to satisfy
\begin{equation}
    J(\phi_{a,+},\phi_{a,-})=0.
    \label{eq:flux-orthogonality}
\end{equation}
Indeed, if
\begin{equation}
    \phi_{a,\pm}(x)=u_{a,\pm}\e^{ik_{a,\pm}x},
\end{equation}
then
\begin{equation}
    J(\phi_{a,+},\phi_{a,-})
    =c\,u_{a,+}^\dagger\alpha u_{a,-}
      \e^{i(k_{a,-}-k_{a,+})x}.
\end{equation}
According to Lemma~\ref{lemma1}, this mixed current must be independent of $x$, whereas
$k_{a,+}\neq k_{a,-}$ for two distinct nonthreshold propagating modes; hence, $c = 0$ proving
Eq.~\eqref{eq:flux-orthogonality}.

It follows that an asymptotic superposition has current
\begin{equation}
    j[A\phi_{a,+}+B\phi_{a,-}]=|A|^2-|B|^2.
    \label{eq:asymptotic-current}
\end{equation}

Choose unit incoming amplitudes in the unit-flux basis.  The state incident
from the left has the asymptotic form
\begin{equation}
    \Psi_L(x)\sim
    \begin{cases}
        \phi_{L,+}+r_L\phi_{L,-}, & \asympL,\\[2mm]
        t_L\phi_{R,+},            & \asympR,
    \end{cases}
    \label{eq:left-incidence}
\end{equation}
and the state incident from the right has the form
\begin{equation}
    \Psi_R(x)\sim
    \begin{cases}
        t_R\phi_{L,-},                         & \asympL,\\[2mm]
        \phi_{R,-}+r_R\phi_{R,+},              & \asympR.
    \end{cases}
    \label{eq:right-incidence}
\end{equation}
Here $r_L,t_L$ are the reflection and transmission amplitudes for left
incidence, and $r_R,t_R$ are those for right incidence.

Because the modes are flux normalized, the correspining reflection and transmission probabilities are defined
\begin{equation}
    \RR_L=|r_L|^2,
    \,
    \TT_L=|t_L|^2,
    \,
    \RR_R=|r_R|^2,
    \,
    \TT_R=|t_R|^2.
    \label{eq:probabilities}
\end{equation}
Current conservation for Eqs.~\eqref{eq:left-incidence} and
\eqref{eq:right-incidence} gives
\begin{equation}
    1-|r_L|^2=|t_L|^2,
    \qquad
    1-|r_R|^2=|t_R|^2.
    \label{eq:individual-unitarity}
\end{equation}

\begin{theorem}[Equal reflection and transmission probabilities for the Dirac equation]\label{MainTheorem}
Under the assumptions of Eqs.~\eqref{eq:left-incidence} and \eqref{eq:right-incidence},
\begin{equation}\label{eq:maintheorem}
    \RR_L(E)=\RR_R(E),
    \qquad
    \TT_L(E)=\TT_R(E).
\end{equation}
No parity symmetry of $M(x)$ in Eq.~\eqref{eq:dirac-general} is required.
\end{theorem}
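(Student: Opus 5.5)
The plan is to apply Lemma~\ref{lemma1} to the \emph{mixed} current of the two scattering solutions, $J(\Psi_L,\Psi_R)$, rather than to either quadratic current alone. By Lemma~\ref{lemma1} this quantity does not depend on $x$, so its value as $x\to-\infty$ must equal its value as $x\to+\infty$. Both limits are computed from the asymptotic forms \eqref{eq:left-incidence} and \eqref{eq:right-incidence}, using the unit-flux normalization \eqref{eq:unit-flux} and the flux orthogonality \eqref{eq:flux-orthogonality}. Decaying closed-channel terms drop out in these limits, exactly as in the derivation of \eqref{eq:asymptotic-current}.

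On the left, $\Psi_L\sim\phi_{L,+}+r_L\phi_{L,-}$ and $\Psi_R\sim t_R\phi_{L,-}$. Expanding $J$ sesquilinearly, the cross term $J(\phi_{L,+},\phi_{L,-})$ vanishes and $J(\phi_{L,-},\phi_{L,-})=-1$. This gives $J(\Psi_L,\Psi_R)\to -\bar r_L t_R$. On the right, $\Psi_L\sim t_L\phi_{R,+}$ and $\Psi_R\sim\phi_{R,-}+r_R\phi_{R,+}$, so $J(\Psi_L,\Psi_R)\to \bar t_L r_R$. Equating the two limits yields the key identity
\begin{equation}
    \bar r_L\, t_R=-\bar t_L\, r_R .
\end{equation}
Taking moduli gives $|r_L||t_R|=|t_L||r_R|$.

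Next I combine this identity with the individual current conservation \eqref{eq:individual-unitarity}. Write $R_a=|r_a|^2$ and $T_a=1-R_a$. Squaring gives $R_L(1-R_R)=(1-R_L)R_R$, which simplifies to $R_L=R_R$. Then $T_L=T_R$ follows immediately from \eqref{eq:individual-unitarity}. The form of $M(x)$ enters only through the Hermiticity used in Lemma~\ref{lemma1}, which shows that no parity assumption is needed.

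I expect the main obstacle to be justifying the asymptotic evaluation rigorously. The cross terms such as $J(\phi_{L,+},\phi_{L,-})$ vanish only because the two modes are nonthreshold and have distinct $k$, which is the content of \eqref{eq:flux-orthogonality}. Closed-channel contributions vanish only in the limit, but Lemma~\ref{lemma1} guarantees that the limit exists and equals the constant value of $J$. I would state this explicitly. I would also check the degenerate case $r_L=0$ or $t_R=0$, where the algebra above still holds without any division.
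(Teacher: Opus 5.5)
Your proof is correct, but it takes a different route from the paper.

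The paper never evaluates the mixed current between the two scattering states. Following Shegelski and Sample, it builds the auxiliary solutions $\Phi_A=t_L\Psi_R-r_R\Psi_L$ and $\Phi_B=t_R\Psi_L-r_L\Psi_R$, in each of which one outgoing channel cancels. Applying conservation of the quadratic current $j$ to each gives $|t_Lt_R-r_Lr_R|^2=|r_R|^2+|t_L|^2=|r_L|^2+|t_R|^2$, and together with \eqref{eq:individual-unitarity} this yields $|r_L|=|r_R|$.

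Your identity $\bar r_L t_R=-\bar t_L r_R$ is the orthogonality of the two columns of the $2\times2$ scattering matrix. Your argument is therefore essentially the unitarity argument of the Introduction, with column orthogonality in place of row normalization. It evaluates a single conserved quantity at both ends, needs no auxiliary states, and also gives a phase relation between the amplitudes.

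The paper's version works only with the quadratic current on the scattering states. The sesquilinear form enters only through \eqref{eq:flux-orthogonality}, which you also use. Once the theorem is established, the paper's identities also give $|t_Lt_R-r_Lr_R|=1$.

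By polarization, conservation of $j$ for all linear combinations of $\Psi_L$ and $\Psi_R$ is equivalent to conservation of $J(\Psi_L,\Psi_R)$. The two proofs therefore rest on the same ingredients and differ only in which consequence of current conservation they extract. Your handling of the decaying closed-channel terms and of the degenerate cases $r_L=0$ or $t_R=0$ is sound.
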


\begin{proof}
Linearity of the stationary Dirac equation implies that any constant linear
combination of $\Psi_L$ and $\Psi_R$ is again a solution at energy $E$.
Define
\begin{equation}
    \Delta:=t_Lt_R-r_Lr_R.
    \label{eq:Delta}
\end{equation}

First form the combination
\begin{equation}
    \Phi_A:=t_L\Psi_R-r_R\Psi_L.
    \label{eq:PhiA}
\end{equation}
Using Eqs.~\eqref{eq:left-incidence} and \eqref{eq:right-incidence}, its
asymptotic form on the right is
\begin{equation}
    \Phi_A\sim t_L\phi_{R,-},
    \qquad \asympR,
    \label{eq:PhiA-right}
\end{equation}
because the two terms proportional to $\phi_{R,+}$ cancel.  On the left,
\begin{equation}
    \Phi_A\sim-r_R\phi_{L,+}+\Delta\phi_{L,-},
    \qquad \asympL.
    \label{eq:PhiA-left}
\end{equation}
Equation~\eqref{eq:asymptotic-current} therefore gives
\begin{equation}
    j_A(+\infty)=-|t_L|^2,
    \qquad
    j_A(-\infty)=|r_R|^2-|\Delta|^2.
\end{equation}
Conservation of the current of $\Phi_A$ yields
\begin{equation}
    |\Delta|^2=|r_R|^2+|t_L|^2.
    \label{eq:identity-A}
\end{equation}

Next form
\begin{equation}
    \Phi_B:=t_R\Psi_L-r_L\Psi_R.
    \label{eq:PhiB}
\end{equation}
The left-going terms cancel on the left, so
\begin{equation}
    \Phi_B\sim t_R\phi_{L,+},
    \qquad \asympL.
    \label{eq:PhiB-left}
\end{equation}
On the right,
\begin{equation}
    \Phi_B\sim\Delta\phi_{R,+}-r_L\phi_{R,-},
    \qquad \asympR.
    \label{eq:PhiB-right}
\end{equation}
Hence
\begin{equation}
    j_B(-\infty)=|t_R|^2,
    \qquad
    j_B(+\infty)=|\Delta|^2-|r_L|^2.
\end{equation}
Conservation of the current of $\Phi_B$ gives
\begin{equation}
    |\Delta|^2=|r_L|^2+|t_R|^2.
    \label{eq:identity-B}
\end{equation}

Equating Eqs.~\eqref{eq:identity-A} and \eqref{eq:identity-B},
\begin{equation}
    |r_R|^2+|t_L|^2=|r_L|^2+|t_R|^2.
    \label{eq:combined-identity}
\end{equation}
Now use Eq.~\eqref{eq:individual-unitarity} to replace
$|t_L|^2$ by $1-|r_L|^2$ and $|t_R|^2$ by $1-|r_R|^2$.  Then
\begin{equation}
    |r_R|^2+1-|r_L|^2
    =|r_L|^2+1-|r_R|^2,
\end{equation}
which implies
\begin{equation}
    |r_L|^2=|r_R|^2.
\end{equation}
A final use of Eq.~\eqref{eq:individual-unitarity} gives
\begin{equation}
    |t_L|^2=|t_R|^2.
\end{equation}
Together with Eq.~\eqref{eq:probabilities}, these are precisely~Eq.~\eqref{eq:maintheorem}.
\end{proof}

Note that time-reversal symmetry was not used.  The result follows from linearity,
Hermiticity, current conservation, and the fact that the asymptotic scattering
problem has only two open ports with one channel per port. When the two asymptotic states lie in different Dirac continua, incoming and outgoing modes
must be identified by the sign of their current, not by the sign of their
momentum.  The stationary one-particle flux proof remains valid whenever the
stated open-channel assumptions hold.  

\section{Time-dependent analysis}\label{SecWignerSimulations}

Theorem~\ref{MainTheorem} constrains only what the asymptotic leads can see: the flux carried by stationary states at a fixed energy, resolved into the two open channels of each lead. It is silent about the interior of the scattering region, which in the Klein regime supports negative-energy modes that have no asymptotic counterpart and can therefore be excited only locally. If the argument of Sec.~\ref{SecProof} is correct, the barrier asymmetry must leave the transmission untouched and reappear somewhere else; the natural candidate is that interior sector. We now test both halves of this expectation by following the dynamics in real time.

The natural language for what follows is phase space. The Wigner function is the quasiprobability distribution obtained by Fourier transforming the density matrix in the relative coordinate \cite{Wigner1932,Hillery1984,schleich2015quantum}: It is real and its marginals reproduce the position and momentum densities, but it is not positive definite, and negativity signals quantum interference. For the Dirac equation the construction yields a matrix-valued distribution \cite{ElzeGyulassyVasak1986,VasakGyulassyElze1987,BialynickiBirula1991}, whose scalar component $W(x,p)=\mathrm{Tr}[\mathcal{W}(x,p)\gamma^{0}]/4$ retains both marginals and is the object plotted below \cite{Cabrera2016}. The relativistic case does carry one caveat:  positivity alone no longer certifies the absence of interference \cite{Campos2014}.

We solve the time-dependent Dirac equation in phase space, propagating $W(x,p)$ with the split-operator propagator~\cite{Cabrera2016}. Throughout this section we work in natural units, $\hbar=c=m=1$, so that the rest energy is $mc^2=1$.

The scatterer is the following spatially asymmetric barrier
\begin{equation}
 V(x)   \frac{9}{16} \left( 20 e^{-\frac{2(x+4)^2}{21}} + 7 e^{-\frac{2x^2}{21}} + 3 e^{-\frac{2(x-4)^2}{21}} \right),
\end{equation}
which is shwon in Fig.~\ref{fig_potential}. It rises abruptly on one side and decays over an extended, smooth tail on the other, so that the two leads are reached through edges of very different steepness, while the asymptotic potentials themselves coincide, $V(\asympL)=V(\asympR)=0$, as required by the assumptions of Sec.~\ref{SecProof}.

\begin{figure}[t]
    \centering
    \includegraphics[width=\columnwidth]{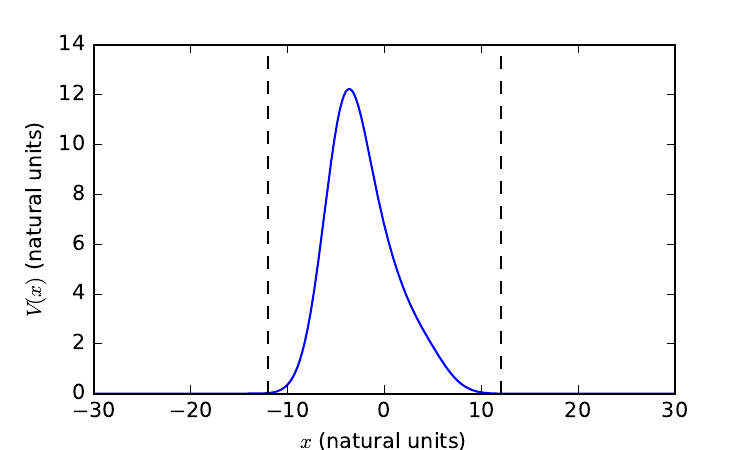}
    \caption{The spatially asymmetric barrier $V(x)$ used in the simulations, in natural units. One edge is sharp and the other decays over an extended smooth tail, while the asymptotic values on the two sides coincide, $V(\asympL)=V(\asympR)=0$. The barrier height exceeds the rest energy $mc^2=1$, so the dynamics lies in the Klein regime. The dashed lines at $x=\pm12$ delimit the interior region; $|x|>12$ defines the reflected and transmitted populations quoted in the text.}
    \label{fig_potential}
\end{figure}

The initial state is a Gaussian wave packet built on a positive-energy spinor, of the form used in Ref.~\cite{Cabrera2016}. Its central momentum and spatial width are chosen so that the mean energy lies well below the top of the barrier and the momentum spread is narrow on the scale of the barrier height. The dynamics is therefore in the tunneling regime, with a transmitted population of a few percent [Fig.~\ref{fig_time_observable}(a)]. The corresponding phase-space portrait, Fig.~\ref{fig_wigner}(a), is a single compact positive Gaussian displaced toward the barrier along the momentum axis.

The barrier of Fig.~\ref{fig_potential} is held fixed throughout; the two propagations differ only in where the initial packet is prepared and in which direction it is launched, so that it encounters the sharp edge first in one run and the smooth edge first in the other. In the first run the packet starts to the left of the barrier and moves toward $+x$: we then call the probability of finding the particle at $x<-12$ the reflection probability and the probability of finding it at $x>12$ the tunneling probability (see Fig.~\ref{fig_time_observable}). In the second run the packet starts to the right of the barrier and moves toward $-x$, and the two roles are interchanged, the population at $x>12$ being the reflected one and that at $x<-12$ the transmitted one. The two runs are the wave-packet analogues of the left- and right-incident stationary states $\Psi_L$ and $\Psi_R$ of Eqs.~\eqref{eq:left-incidence} and \eqref{eq:right-incidence}. Alongside these populations we monitor the total weight of the negative-energy states, obtained by projecting the propagated state onto the negative-energy subspace at each instant.

\begin{figure}[t]
    \centering
    \includegraphics[width=\columnwidth]{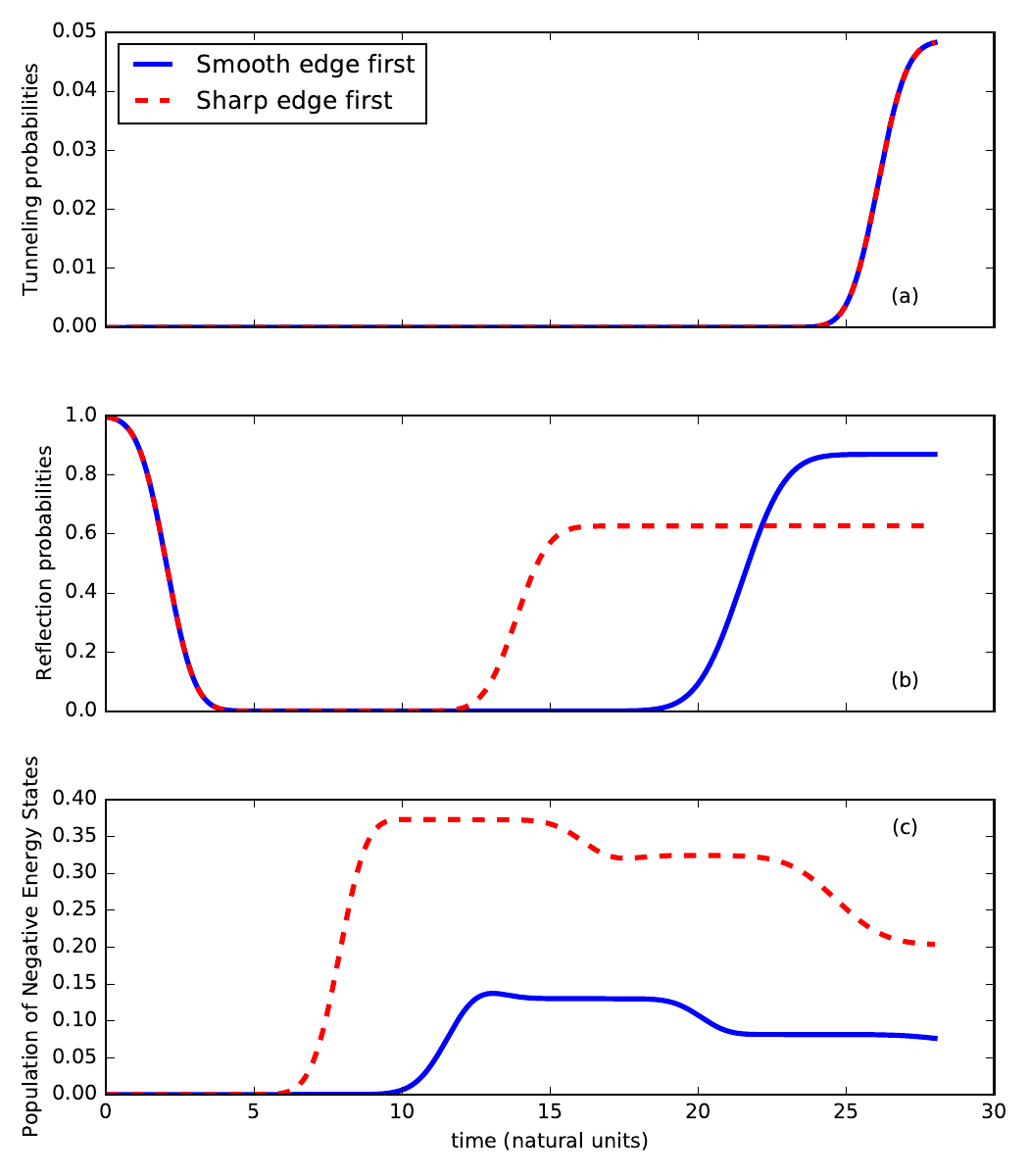}
    \caption{Time-resolved observables for the two propagations, one in which the wave packet meets the sharp edge of the barrier of Fig.~\ref{fig_potential} first and one in which it meets the smooth edge first. (a) Tunneling probability, i.e., the population beyond the barrier. The two curves coincide, confirming Theorem~\ref{MainTheorem} (b) Reflection probability, i.e., the population in front of the barrier. The two curves separate. (c) Population of the negative-energy states. The sharp edge generates about four times more antiparticles than the smooth edge; this population remains under the barrier and accounts for the difference in panel (b).}
    \label{fig_time_observable}
\end{figure}

The results are displayed in Fig.~\ref{fig_time_observable}. Panel (a) shows the tunneling probability, defined as the probability of finding the particle after the barrier. In agreement with the theorem of Sec.~\ref{SecProof}, the two curves are indistinguishable throughout the evolution: the transmitted population is insensitive to the side of incidence, despite the pronounced asymmetry of the scatterer.

The interior behaves in the opposite way. Figure~\ref{fig_time_observable}(c) shows the population of the negative-energy states: about four times as many antiparticles are created when the packet strikes the sharp edge as when it strikes the smooth one, which is what the known physics of Klein tunneling leads one to expect, pair creation being controlled by the steepness of the potential barrier. This population does not leave the scattering region on the time scale of the simulation, and its imprint is visible in Fig.~\ref{fig_time_observable}(b): the probability of finding the particle in front of the barrier differs between the two runs by an amount that tracks the excess negative-energy weight generated at the sharp edge. The flux that is stored in the interior sector is flux that has not yet come back out.

\begin{figure}[t]
    \centering
    \includegraphics[width=\columnwidth]{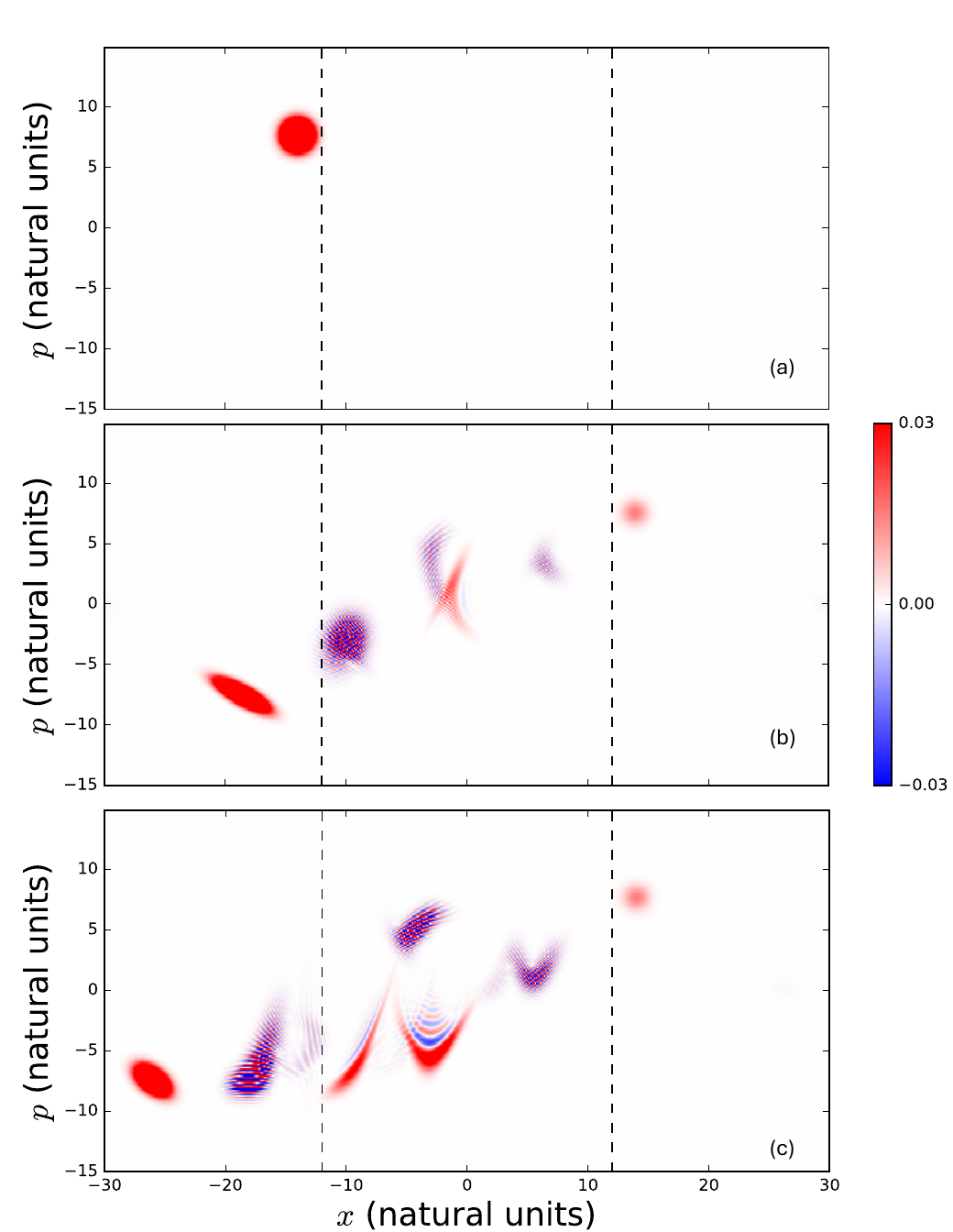}
    \caption{Phase-space portrait of the two propagations, displayed as the Wigner function $W(x,p)$ in natural units. (a) Initial state, in the geometry in which it faces the smooth edge. (b) Final state of that run. (c) Final state of the run facing the sharp edge, plotted after the phase-space inversion $W(x,p)\to W(-x,-p)$, which maps the mirrored geometry onto the axes of panel (b) and allows the two final states to be compared directly. The dashed lines at $x=\pm12$ delimit the interior region entering Fig.~\ref{fig_time_observable}. The sharp edge generates a rich interference pattern, the phase-space signature of \emph{zitterbewegung}, concentrated underneath the barrier.}
    \label{fig_wigner}
\end{figure}

The mechanism becomes transparent in phase space, Fig.~\ref{fig_wigner}. Panels (b) and (c) show the two final states on common axes, the second run being displayed after the inversion $W(x,p)\to W(-x,-p)$ mapping the right-incident simulation onto the axes of the left-incident simulation. After the barrier, $x > 12$, the transmitted distributions not only look identical, but carry the same probability, in accordance with Theorem~\ref{MainTheorem}. Before and under it they do not.  The initial Gaussian state that struck the smooth edge of the barrier remains closer to the barrier, whereas the Gaussian state that struck the sharp edge develops a rapidly oscillating pattern with pronounced negative regions. These fringes are the phase-space signature of \emph{zitterbewegung}, i.e., of the interference between positive- and negative-energy components that a steep barrier generates, and they sit in the region $|x|<12$, precisely where the missing flux of Fig.~\ref{fig_time_observable}(b) resides.

The simulations therefore reproduce both halves of the picture anticipated above. The transmission obeys the theorem exactly, as it must. The asymmetry of the barrier is not absent from the dynamics; it is displaced into the interior, where it survives as a directionally dependent population of negative-energy states. That the finite-time reflected populations differ is not a violation of the theorem but a consequence of this displacement: a wave packet is a superposition of stationary states, and the weight that remains bound under the barrier has not yet been resolved into the asymptotic outgoing channels to which the theorem applies.

\section{Conclusion}

We have shown that transmission in one spatial dimension is a poor probe of barrier asymmetry for a Dirac particle, as long as the leads remain single-channel; what responds is the finite-time reflected population, which differs between the two orientations by the excess antiparticle created. Directional control must therefore be sought in the pair-production channel rather than in the transmitted current.

Two possible directions follow naturally. The first is to relax the boundedness of the potential and admit an asymptotically unbounded ramp, $V(x)\propto x$ as $x\to\pm\infty$, which already breaks the left--right symmetry in the nonrelativistic case~\cite{schach_tunneling_2022}. Such a ramp may violate the single-channel assumptions on which Theorem~\ref{MainTheorem} rests, since the asymptotic dispersion no longer settles to a fixed pair of propagating modes, and the interior asymmetry identified here would then have an open channel to escape into. The second is to explore the connection to the solid-state phenomenon that goes under the name of ``asymmetric Klein tunneling''~\cite{Li2017,Iurov2020,Zhang2021}. There the asymmetry resides in an anisotropic dispersion relation  and the barrier is symmetric, whereas here the dispersion is isotropic and the barrier is not. The two mechanisms are independent, and a platform carrying both may exhibit a directional dependence that neither produces alone.

\acknowledgments

D.I.B. is grateful to Prof. Christoph Keitel of Max Planck Institute for Nuclear Physics in Heidelberg for a brief but very productive visit without which this work would not have been possible. 

D.I.B. is supported by Army Research Office (ARO) (grant W911NF-23-1-0288; program manager Dr.~James Joseph).  The views and conclusions contained in this document are those of the authors and should not be interpreted as representing the official policies, either expressed or implied, of ARO, or the U.S. Government. The U.S. Government is authorized to reproduce and distribute reprints for Government purposes notwithstanding any copyright notation herein.

\bibliography{refs}

\end{document}